\def\ps@headings{%
\def\@oddhead{\mbox{}\scriptsize\rightmark \hfil \thepage}%
\def\@evenhead{\scriptsize\thepage \hfil \leftmark\mbox{}}%
\def\@oddfoot{}%
\def\@evenfoot{}}
\begin{document}

\title{Optimization-Based Linear Network Coding for General Connections of Continuous Flows}

\author{\authorblockN{Ying Cui\thanks{Y. Cui was supported in part by the  National Science Foundation of China grant 61401272.  M. M\'{e}dard was supported by the Air Force Office of Scientific Research (AFOSR) grant  FA9550-13-1-0023. E. Yeh was supported by the National Science Foundation grant CNS-1423250.}
}\authorblockA{Dept. of EE}\authorblockA{Shanghai Jiao Tong University}\and\authorblockN{Muriel M\'{e}dard}\authorblockA{Dept. of EECS}
\authorblockA{MIT}\and\authorblockN{Edmund Yeh}\authorblockA{Dept. of ECE} \authorblockA{Northeastern University}
\and \authorblockN{Douglas Leith}\authorblockA{Dept. of CS}
\authorblockA{Trinity College Dublin}\and \authorblockN{Ken Duffy}\authorblockA{Hamilton Institute}
\authorblockA{NUI Maynooth}
}

\maketitle

\newtheorem{Thm}{Theorem}
\newtheorem{Lem}{Lemma}
\newtheorem{Cor}{Corollary}
\newtheorem{Def}{Definition}
\newtheorem{Exam}{Example}
\newtheorem{Alg}{Algorithm}
\newtheorem{Sch}{Scheme}
\newtheorem{Prob}{Problem}
\newtheorem{Rem}{Remark}
\newtheorem{Proof}{Proof}
\newtheorem{Asump}{Assumption}
\newtheorem{Subp}{Subproblem}

\begin{abstract} For general connections, the problem  of finding network codes and optimizing resources for those codes  is intrinsically difficult and little is known about its complexity.   Most of the existing  solutions  rely on very restricted classes of network codes in terms of the number of flows allowed to be coded together, and are not entirely distributed. 
In this paper, we consider a new method for constructing linear network codes for general connections of continuous flows to minimize the total cost of  edge use based on mixing.  
We first formulate the minimum-cost network coding design problem. To solve the optimization problem, we propose two equivalent alternative formulations with discrete mixing and continuous mixing, respectively, and develop distributed algorithms to solve them. 
Our approach allows fairly general coding across flows and guarantees no greater cost than any solution without  inter-flow network coding.
\end{abstract}

\section{Introduction}

In the case of general connections (where each destination can request information from any subset of sources), 
the problem of finding network codes  is intrinsically difficult.
Little is known about its complexity and  its decidability remains unknown. 
In certain special cases, such as multicast connections (where destinations share all of their demands), it is sufficient to satisfy a Ford-Fulkerson type of min-cut max-flow constraint between all sources to every destination individually. 
For multicast connections, linear codes are sufficient \cite{KM03, LYC03}  and  a distributed random construction exists\cite{Hoetal06}. 
In the literature, linear codes have been the most widely considered. However,  in general, linear codes over finite fields may not be sufficient for general connections \cite{DFZ05}.  
In addition, even when we consider simple scalar network codes (with scalar coding coefficients),  the problem of code construction  for general connections (i.e., neither multicast nor its variations)  remains vexing\cite{twounicastKamath2014}.  
The main difficulty lies in canceling the effect of flows that are
coded together but not destined for a common
destination.

The problem of code construction becomes more involved
when we seek to limit the use of network links for reasons
of network resource management. 
In the case of multicast connections of  continuous flows, it is known that finding a minimum-cost  solution for convex cost functions of  flows over edges of the network is  a convex optimization problem  and can be solved distributively using convex decomposition\cite{Lunetal06}. 
In the case of
general connections of continuous flows, however,
network resource minimization, even when considering
only restricted code constructions, appears to be difficult.


In general, there are two types of coding approaches for optimizing network use for general connections. The first type of coding is  mixing, which consists of coding together flows from sources using the random linear distributed code construction of \cite{Hoetal06} (originally proposed for multicast connections),  as though the flows were parts of a common multicast connection.
In this case, no explicit code coefficients are provided and decodability is ensured with high probability by the random coding, given that mixing is properly designed.  For example, in \cite{Lun04networkcoding}, a two-step mixing approach is proposed for network resource minimization of general connections, where flow partition and flow rate optimization are considered separately.  In \cite{CUI2015ISITreport}, we introduce linear network mixing coefficients and present a new method for constructing linear network codes for general connections of integer flows to minimize the total cost of edge use. The minimum-cost network coding design problem in \cite{CUI2015ISITreport} is a discrete optimization problem, which jointly considers mixing and flow optimization.  
The second type of coding is an explicit
linear code construction, where one provides specific
linear coefficients, to be applied to flows at different nodes, over
some finite field. 
In this case, the explicit linear code constructions are usually  simplified by restricting them to be binary, generally in the context of coding flows together only pairwise.  For example, in \cite{POEM10} and \cite{ShroffBButterfly}, simple two-flow combinations  are proposed for network resource minimization of  general connections. 

The flow rate optimization in \cite{Lun04networkcoding},  the joint mixing and flow optimization in \cite{CUI2015ISITreport},  and the joint two-flow coding and flow optimization in \cite{POEM10, ShroffBButterfly}  can be solved distributively. However, the separation of flow partition and flow rate optimization
in \cite{Lun04networkcoding} and the pairwise coding  in \cite{POEM10, ShroffBButterfly}  lead in general to   feasibility region  reduction and  network cost increase.  In \cite{CUI2015ISITreport}, we do not allow flow splitting and coding over time, leading to coded symbols flowing through each edge of the network at an integer rate. The restriction of integer flow rates affects the network cost reduction.

In this paper, we consider a new method for constructing linear network codes to minimize the total cost of edge use for satisfying general connections of continuous flows.  We generalize the linear network  mixing coefficients introduced in \cite{CUI2015ISITreport}.  In contrast to\cite{CUI2015ISITreport}, we allow flow splitting and coding over time to further reduce network cost. 
Using mixing  with generalized mixing coefficients,  we  formulate the minimum-cost network coding design problem, which is an instance of mixed discrete-continuous programming. Our mixing-based formulation allows for fairly general coding across flows, offers a tradeoff between performance and computational complexity via tuning a design parameter controlling the mixing effect, and guarantees no greater cost than any solution without  inter-flow network coding. To solve the mixed discrete-continuous optimization problem, we propose two equivalent alternative formulations with discrete mixing and continuous mixing, respectively, and develop distributed algorithms to solve them. Specifically, the distributed algorithm for the discrete mixing formulation  is obtained by relating  the optimization problem to a constraint satisfaction problem (CSP) in discrete optimization and applying recent results in the domain \cite{cfl}. The distributed algorithm for the continuous mixing formulation is based on penalty methods for nonlinear programming\cite{Bertsekasbooknonlinear:99} in continuous optimization.

\section{Network Model and Definitions}
In this section, we first illustrate the network model  for general connections of continuous flows.  The model is similar to  that we considered in \cite{CUI2015ISITreport} for integer flows, except that here we consider general flow rates and edge capacities, and allow flow splitting and coding over time. 
Next, to facilitate the understanding of the formulations proposed in Sections \ref{sec:cont-mix}, \ref{subsec:alt-cont-mix-cont}, and \ref{subsec:alt-cont-mix-dis}, we  also briefly illustrate  the formal relationship between linear network coding and mixing established in \cite{CUI2015ISITreport}.


\subsection{Network Model}

We consider a directed acyclic network with general connections. Let $\mathcal G=(\mathcal V, \mathcal E)$ denote the  directed acyclic  graph, where $\cal V$ denotes the set of $V=|\cal V|$ nodes and $\cal E$ denotes the set of $E=|\cal E|$ edges.  To simplify notation, we assume there is only one edge from  node $i\in\mathcal V$ to node $j\in \mathcal V$, denoted as edge $(i,j)\in\mathcal E$.\footnote{Multiple edges from node $i$ to node $j$ can be modeled by introducing multiple extra nodes, one on each edge.}  
For each node $i\in \cal V$, define the set of incoming neighbors to be $\mathcal I_i=\{j: (j,i)\in \mathcal E \}$ and the set of outgoing neighbors to be $\mathcal O_i=\{j:(i,j)\in \mathcal E\}$.  Let $I_i=|\mathcal I_i|$ and $O_i=|\mathcal O_i|$ denote the in degree and out degree of node $i\in \mathcal V$, respectively. Assume $I_i \leq D$ and $O_i\leq D$ for all $i \in \mathcal V$. 
Let $\mathcal P=\{1,\cdots, P\}$ denote the set of $P=|\mathcal P|$ flows to be carried by the network.  For each flow $p\in \mathcal P$, let $s_p\in \mathcal V$ be its source.  We consider continuous flows. Let  $R_p\in \mathbb R^+$ denote the source rate for source $p$,  where $\mathbb R^+$ denotes the set of non-negative real numbers. 
Let $\mathcal S=\{s_1,\cdots, s_P\}$ denote the set of $P=|\mathcal S|$ sources. 
To simplify notation, we assume different flows do not share a common source node and no source node has any incoming edges. 
Let $\mathcal T=\{t_1,\cdots, t_T\}$ denote the set of $T=|\mathcal T|$ terminals. Each terminal $t\in \mathcal T$ demands a subset of $P_t=|\mathcal P_t|$ flows $\mathcal P_t\subseteq \mathcal P$. Assume each flow is requested by at least one terminal, i.e., $\cup_{t\in\mathcal T}\mathcal P_t=\mathcal P$.  To simplify notation, we assume no terminal has any outgoing edges.


 Let  $B_{ij}\in \mathbb R^+$ denote the edge capacity for edge $(i,j)$.  
 Let $z_{ij}\in [0,B_{ij}]$ denote the transmission rate through  edge $(i,j)$. 
 We assume a cost is incurred on an edge when information is transmitted through the edge. Let $U_{ij}(z_{ij})$ denote the cost function incurred on edge $(i,j)$ when the transmission rate through  edge $(i,j)$ is $z_{ij}$. Assume $U_{ij}(z_{ij})$ is  convex, non-decreasing, and twice continuously differentiable in $z_{ij}$.  
 We are interested in the problem of finding  linear network coding designs and  minimizing the network cost $\sum_{(i,j)\in \mathcal E} U_{ij}(z_{ij})$ of general connections of continuous flows for those designs.

\subsection{Scalar Time-Invariant Linear Network Coding and Mixing}\label{subsec:coding-mixing}

For ease of exposition, in this section, we   illustrate linear network coding and mixing  by considering   unit flow rate, unit edge capacity and one (coded) symbol transmission for each edge per unit time, and adopt scalar time-invariant notation. Later, in Sections \ref{sec:cont-mix}, \ref{subsec:alt-cont-mix-cont}, and \ref{subsec:alt-cont-mix-dis}, 
we shall consider general flow rates and edge capacities  and allow flow splitting and coding over time, which enable multiple (coded) symbols to flow through each edge at a continuous rate. 

Consider a  finite field $\mathcal F$ with size $F=|\mathcal F|$.  
 In linear network coding,    a linear combination  over $\mathcal F$ of the symbols in $\{\sigma_{ki}\in \mathcal F:k\in \mathcal I_i\}$ from the incoming edges $\{(k,i):k\in \mathcal I_i\}$, i.e., 
$\sigma_{ij}=\sum_{k\in \mathcal I_i}\alpha_{kij}\sigma_{ki}
$,   can be transmitted through the shared edge $(i,j)$, where coefficient $\alpha_{kij}\in \mathcal F$ is referred to  as the local coding coefficient corresponding to edge $(k,i)\in \mathcal E$ and edge $(i,j)\in \mathcal E$. 
On the other hand, the symbol of  edge  $(i,j)\in \mathcal E$ can be expressed as a linear combination 
over $\mathcal F$ of the source symbols $\{\sigma_p\in \mathcal F:p\in \mathcal P\}$, i.e., $\sigma_{ij}=\sum_{p\in \mathcal P}c_{ij,p}\sigma_p$, where coefficient $c_{ij,p}\in \mathcal F$ is referred to as the global coding coefficient of  flow $p\in \mathcal P$ and edge $(i,j)\in \mathcal E$. Let $\mathbf c_{ij}=(c_{ij,1},\cdots, c_{ij,p},\cdots,c_{ij,P})\in \mathcal F^P$ denote the $P$ coefficients corresponding to this linear combination for edge $(i,j)\in \mathcal E$, referred to as the global coding vector of edge $(i,j)\in \mathcal E$. 
Note that, we consider scalar  time-invariant linear network coding. 
In other words, $\alpha_{kij}\in \mathcal F$ and $c_{ij,p}\in \mathcal F$ are both scalars, and $\alpha_{kij}$ and $c_{ij,p}$ do not change over time.    
When using scalar linear network coding, for each terminal,  extraneous  flows are allowed to be mixed with the  desired flows on the paths to the terminal, as the  extraneous flows can be cancelled at intermediate nodes or at the terminal.


In many cases, we shall see that the specific values of the local or global coding coefficients   are not required in  our design.  For this purpose, we introduce the  mixing concept based on local and global mixing coefficients  established in \cite{CUI2015ISITreport}. Specifically, we consider  the local mixing  coefficient $\beta_{kij}\in \{0,1\}$ corresponding to edge $(k,i)\in \mathcal E$ and edge $(i,j)\in \mathcal E$,
which relates to the local coding coefficient 
$\alpha_{kij}\in \mathcal F$ as follows. 
$\beta_{kij}=1$ indicates that symbol $\sigma_{ki}$ of edge $(k,i)\in \mathcal E$ is allowed to contribute to the linear combination over $\mathcal F$ forming symbol $\sigma_{ij}$  and $\beta_{kij}=0$ otherwise. Thus, if $\beta_{kij}=0$, we have  $\alpha_{kij}=0$  (note that $\alpha_{kij}$ can be zero when $\beta_{kij}=1$).
Similarly, we consider the global mixing coefficient $x_{ij,p}\in\{0,1\}$ of  flow $p\in \mathcal P$ and edge $(i,j)\in \mathcal E$, which relates to the global coding coefficient $c_{ij,p}\in\mathcal F$ as follows. $x_{ij,p}=1$ indicates that flow $p$ is allowed to be  mixed (coded) with other flows, i.e.,  symbol $\sigma_p$ is allowed to contribute to the linear combination over $\mathcal F$ forming symbol $\sigma_{ij}$, and $x_{ij,p}=0$ otherwise. Thus, if $x_{ij,p}=0$, we have $c_{ij,p}=0$ (note that $c_{ij,p}$ can be zero when $x_{ij,p}=1$). Then, we introduce the global mixing vector $\mathbf x_{ij}=(x_{ij,1},\cdots, x_{ij,p},\cdots, x_{ij,P})\in \{0,1\}^P$  for edge $(i,j)\in \mathcal E$, which relates to the global coding vector $\mathbf c_{ij}=(c_{ij,1},\cdots, c_{ij,p},\cdots,c_{ij,P})\in \mathcal F^P$. 
Similarly, we consider scalar time-invariant linear network mixing.  That is, $\beta_{kij}\in \{0,1\}$ and $x_{ij,p}\in \{0,1\}$ are both scalars, and $\beta_{kij}$ and $x_{ij,p}$ do not change over time.   
 

\begin{figure}[h]
\begin{center}
\includegraphics[height=2.3cm]{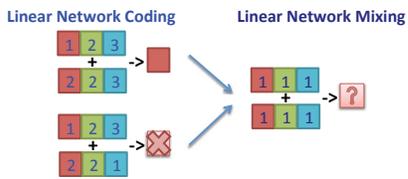}
\caption{Comparisons between linear network coding and linear network mixing. 
Given the two global coding vectors on the left, we can tell  that the (red) symbol can be decoded in the first case and cannot be decoded in the second case. However, given the two global mixing vectors on the right (same for the two cases on the left),   it is not sufficient to tell whether the  (red) symbol  can be decoded or not.}\label{Fig:coding-mixing}
\end{center}
\end{figure}

Global mixing vectors provide a natural way of speaking of flows as possibly coded or not coded without knowledge of the specific values of the global coding vectors. Intuitively,  global mixing vectors can be regarded as a limited  representation  of global coding vectors.  Network mixing vectors may not be sufficient for telling  whether a certain symbol can be decoded or not,  as illustrated in Fig.~\ref{Fig:coding-mixing}. 
Therefore, using the network mixing representation, extraneous flows which are  mixed with the desired flows on the paths to each terminal, are not guaranteed to be cancelled at the terminal. Let $\mathbf e_p$ denote the vector with the $p$-th element being 1 and all the other elements being 0. Let $\vee$ denote the  ``or'' operator (logical disjunction). We now define
the feasibility for scalar linear network mixing.
\begin{Def} [Feasibility of Scalar Linear Network Mixing] For a network $\mathcal G=(\mathcal V, \mathcal E)$ and a set of flows $\mathcal P$ with sources $\mathcal S$  and  terminals $\mathcal T$, a linear network mixing design $\{\mathbf x_{ij}\in \{0,1\}^P: (i, j)\in \mathcal E\}$ is called feasible if the following three conditions are satisfied: 
1) $\mathbf x_{s_pj}=\mathbf e_p$ for source edge $(s_p,j)\in \mathcal E$, where $s_p\in \mathcal S$ and $ p\in \mathcal P$; 
2) $\mathbf x_{ij}=\vee_{k\in \mathcal I_i} \beta_{kij}\mathbf x_{ki}$ for edge  $(i,j) \in \mathcal E$ not outgoing from a source, where $i\not\in \mathcal S$ and $\beta_{kij}\in \{0,1\}$; 
3) $\vee_{i\in \mathcal I_t} x_{it,p}=1$ for all  $p \in \mathcal P_t$ and  $x_{it,p}=0$ for all  $i\in \mathcal I_t$ and $p \not\in\mathcal P_t$, where $t\in \mathcal T$.
\label{Def:feasibility-mixing}
\end{Def}

Note that Condition 3) in Definition \ref{Def:feasibility-mixing} ensures that for each terminal,  the extraneous flows are not  mixed with the desired flows on the paths to the terminal. In other words, using   linear network mixing, only mixing is allowed at intermediate nodes. This is not as general as using linear network coding, which allows both mixing and canceling  (i.e., removing one or multiple flows from a
mixing of flows) at intermediate nodes.

Given a feasible  linear network mixing design (specified by $\{\beta_{kij}\in\{0,1\}:(k,i),(i,j)\in\mathcal E\}$), one way to accomplish mixing when $\mathcal F$ is large is to use random linear network coding \cite{Hoetal06} (to obtain $\{\alpha_{kij}\in\mathcal F:(k,i),(i,j)\in\mathcal E\}$), as discussed in the introduction.  Note that, in performing random linear network coding based on $\beta_{kij}$, $\alpha_{kij}$ can be randomly chosen in $\mathcal F$ when $\beta_{kij}=1$, but $\alpha_{kij}$ must be chosen to be 0  when $\beta_{kij}=0$.

\section{Continuous Flows  with Mixing Only}\label{sec:cont-mix}

In this section, we consider the minimum-cost scalar time-invariant  linear network coding design problem   for general connections  of continuous flows with mixing only.   Starting from this section,  we consider multiple global mixing vectors (each may correspond  to multiple global coding vectors) for each edge and allow coded symbols to flow through each edge at a continuous rate.

%
%
%

\subsection{Design Parameter}


We consider multiple global mixing vectors for each edge.  This generalizes the linear network  mixing coefficients introduced in \cite{CUI2015ISITreport}. 
We refer to the number of global network mixing vectors  for each edge as the mixing parameter, denoted as $L\in\{1,\cdots, L_{\max}\}$, where $L_{\max}$ is the maximum number of global network mixing vectors necessary for  decodability   using mixing (cf. Definition \ref{Def:feasibility-mixing}),  and is given as follows.
Let $\boldsymbol{\mathcal Y}$ denote the set of atoms of the algebra generated by $\{\mathcal P_t: t\in \mathcal T\}$, i.e., $\boldsymbol{\mathcal Y}\triangleq\{\cap_{t\in \mathcal T} \mathcal Y_t: \mathcal Y_t=\mathcal P_t \ \text{or}\ \mathcal Y_t=\mathcal P-\mathcal P_t\}-\{\emptyset\}.$ In other words, $\boldsymbol{\mathcal Y}$ gives a set partition of $\mathcal P$ that represents the flows that can be mixed (cf. Definition \ref{Def:feasibility-mixing}) over an edge in the worst case (i.e., all terminals obtaining flows through the same edge). We choose 
$L_{\max}=|\boldsymbol{\mathcal Y}|$. 
Note that $1\leq L_{\max}\leq P$, where $L_{\max}=1$ for the multicast case, i.e.,  $\mathcal P_t=\mathcal P$ for all $t\in \mathcal T$, and $L_{\max}=P$ for the unicast case, i.e., $\mathcal P_{t'}\cap \mathcal P_t=\emptyset$ for all $t\neq t'$ and $t,t'\in \mathcal T$. 
Fig. \ref{Fig:para_cont} illustrates an example of flow partition $\boldsymbol{\mathcal Y}$ and mixing parameter $L$ for the general case.  

\begin{figure}[h]
\begin{center}
\includegraphics[height=3cm, width=3cm]{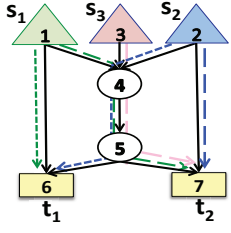}
\caption{\small{Illustration of a feasible solution to Problem \ref{Prob:new-low-cont-multi}.   
$\mathcal P=\{1,2,3\}$, $\mathcal S=\{s_1,s_2,s_3\}$,  $R_1=R_2=R_3=1$, $B_{ij}=2$ for all  $(i,j)\in \mathcal E$, $\mathcal T=\{t_1,t_2\}$, $\mathcal P_1=\{1,2\}$ and $\mathcal P_2=\{1,2,3\}$. Thus,   $\boldsymbol{\mathcal Y}=\{\{1,2\},\{3\}\}$, $L_{\max}=|\boldsymbol{\mathcal Y}|=2$ and $L\in\{1,2\}$.  
 }}\label{Fig:para_cont}
\end{center}
\end{figure}


For a given mixing parameter $L$, we now introduce the global and local network mixing vectors. 
For each $l=1,\cdots,L$, let $\mathbf x_{ij}(l)=(x_{ij,1}(l),\cdots, x_{ij,p}(l),\cdots, x_{ij,P}(l))\in \{0,1\}^P$  denote the $l$-th global network mixing vector over edge $(i,j)\in \mathcal E$. 
Let $\boldsymbol{\beta}_{kij}(l,m)\in \{0,1\}$ denote the local mixing  coefficient corresponding to the $l$-th global network mixing vector of  edge $(k,i)\in \mathcal E$ (i.e., $\mathbf x_{ki}(l)$) and  the $m$-th global network mixing vector of  edge $(i,j)\in \mathcal E$ (i.e., $\mathbf x_{ij}(m)$),  where $l,m=1,\cdots,L$.

\subsection{Problem Formulation}

We would like to find the minimum-cost scalar time-invariant   linear network coding   design with design parameter $L\in\{1,\cdots, L_{\max}\}$ for general connections  of continuous flows  with mixing only.

\begin{Prob} [Continuous Flows with Mixing Only] 
\begin{align}
 {U_{x}^*}(L)&=\min_{\substack{\{z_{ij}\},\{z_{ij}(l)\} \{f_{ij,p}^{t}(l)\}\\ \{  x_{ij,p}(l)\},\{\beta_{kij}(l,m)\}}}\quad  \sum_{(i,j)\in \mathcal E} U_{ij}(z_{ij})\nonumber\\
s.t.\ 
& 0\leq z_{ij}\leq B_{ij}, \ (i,j)\in \mathcal E\label{eqn:mix-z-multi}\\
& x_{ij,p} (l)\in \{0,1\}, \   l=1,\cdots, L,\ (i,j)\in \mathcal E,\ p\in \mathcal P\label{eqn:mix-x-multi}\\
& \beta_{kij}(l,m)\in \{0,1\}, \ l,m=1,\cdots, L, \ (k,i),  (i,j) \in \mathcal E \label{eqn:mix-beta-multi}\\
&f_{ij,p}^{t} (l)\geq 0, \ l=1,\cdots, L,\ (i,j)\in \mathcal E,\  p \in \mathcal P_t, \ t\in \mathcal T\label{eqn:mix-f-multi}\\
& \sum_{p\in\mathcal P_t}  f_{ij,p}^{t}(l) \leq z_{ij}(l),   \ t\in \mathcal T , \ l=1,\cdots,L, \ (i,j)\in \mathcal E\label{eqn:mix-f-z-multi}\\
& \sum_{l=1}^L z_{ij}(l)\leq z_{ij},  \   (i,j)\in \mathcal E\label{eqn:mix-f-z-sum-multi}\\
&\sum_{k\in \mathcal O_i}\sum_{l=1}^L f_{ik,p}^{t}(l) -\sum_{k\in \mathcal I_i} \sum_{l=1}^L f_{ki,p}^{t}(l)=\sigma_{i,p}^{t}, \nonumber\\
& \hspace{40mm}  i \in \mathcal V,\ p \in \mathcal P_t, \ t\in \mathcal T\label{eqn:mix-f-conv-multi}\\
& f_{ij,p}^{t}(l)\leq x_{ij,p}(l)B_{ij}, \  l=1,\cdots, L,\ (i,j)\in \mathcal E, \nonumber\\
& \hspace{32mm}  \ p \in \mathcal P_t, \ t\in \mathcal T \label{eqn:mix-f-x-cont-multi}\\
& \mathbf x_{s_pj}(l)=\mathbf e_p, \  l=1,\cdots, L,\ (s_p,j)\in \mathcal E, \ p\in \mathcal P\label{eqn:f-x-src-multi}\\
& \mathbf x_{ij}(l)=\vee_{k\in \mathcal I_i, m=1,\cdots, L} \beta_{kij}(m,l) \mathbf x_{ki}(m), \nonumber\\
&  \hspace{25mm}  l=1,\cdots, L,\ (i,j)\in \mathcal E, \ i \not\in\mathcal S\label{eqn:mix-x-inter-multi}\\
&x_{it,p}(l)=0, \  l=1,\cdots,L,\  i\in \mathcal I_t, \ p \not\in\mathcal P_t, \ t\in \mathcal T\label{eqn:mix-x-dest-multi}
\end{align}
where 
\begin{align}
\sigma_{i,p}^{t}=
\begin{cases}
R_p, & i=s_p\\
-R_p, &i=t\\
0, & \text{otherwise}
\end{cases}\quad  i \in \mathcal V,\ p \in \mathcal P_t, \ t\in \mathcal T.\label{eqn:sigma}
\end{align}
\label{Prob:new-low-cont-multi}
\end{Prob}

In the above formulation,\footnote{Note that \eqref{eqn:mix-x-multi} with $j=t$, \eqref{eqn:mix-f-conv-multi} with $i=t$, and \eqref{eqn:mix-f-x-cont-multi} with $j=t$ imply $\vee_{i\in \mathcal I_t, l=1,\cdots, L} x_{it,p}(l)=1$ for all  $p \in \mathcal P_t$, i.e., Condition 3) of Definition \ref{Def:feasibility-mixing},  where $t\in \mathcal T$.} 
$f_{ij,p}^{t}(l)\geq0$ can be interpreted as  the  rate of delivering flow $p\in \mathcal P_t$ to terminal $t\in \mathcal T$   over edge $(i,j)\in \mathcal E$  using $\mathbf x_{ij}(l)$,  and $ z_{ij}(l)$ denotes the transmission rate corresponding to $\mathbf x_{ij}(l)$ over edge $(i,j)\in \mathcal E$, where $l=1,\cdots,L$.  Problem \ref{Prob:new-low-cont-multi} is a mixed discrete-continuous optimization problem and is NP-complete in general. For notational simplicity, in this paper, we omit the conditions $\{z_{ij}:(i,j)\in \mathcal E\}$, $\{z_{ij}(l):l=1,\cdots, L,(i,j)\in \mathcal E\}$, $\{f_{ij,p}^{t}(l):l=1,\cdots, L, (i,j)\in \mathcal E,p\in\mathcal P_t, t\in \mathcal T\}$, $\{ x_{ij,p}(l):l=1,\cdots, L, (i,j)\in \mathcal E, p\in \mathcal P\}$, and $\{\beta_{kij}(l,m):l,m=1,\cdots, L, (k,i),(i,j)\in\mathcal E\}$  where there is no confusion.

\begin{Rem} [Problem \ref{Prob:new-low-cont-multi} with $L=1$ for Multicast]  For the multicast case (i.e., 
$\mathcal P_t=P$ for all $t\in \mathcal T$) and $L=1$,  the constraint in \eqref{eqn:mix-x-dest-multi} does not exist,  the constraint in \eqref{eqn:mix-f-z-sum-multi} can be satisfied by choosing $z_{ij}(1)=z_{ij}$, and  the constraint in \eqref{eqn:mix-f-x-cont-multi} is always satisfied  by choosing $\beta_{kij}(1,1)=1$ and  choosing $x_{ij,p}(1)$ according to \eqref{eqn:f-x-src-multi} and \eqref{eqn:mix-x-inter-multi}.   Therefore, in the  multicast case, Problem \ref{Prob:new-low-cont-multi} with $L=1$ for general connections reduces to the conventional minimum-cost   network coding design problem for the multicast case \cite{Lunetal06}.
\label{Rem:multi-L-1}
\end{Rem}

\begin{Rem} [Comparison with Intra-flow Coding] Problem \ref{Prob:new-low-cont-multi}  (with any $L\in\{1,\cdots, L_{\max}\}$) with an extra constraint $\sum_{p\in\mathcal P}x_{ij,p}(l)\in\{0,1\}$ for all $(i,j)\in \mathcal E$ and $l=1,\cdots, L$ is equivalent to a minimum-cost intra-flow coding problem. Thus, the minimum network cost of Problem \ref{Prob:new-low-cont-multi}  (with any $L\in\{1,\cdots, L_{\max}\}$) is no greater than the minimum costs for  intra-flow coding. \end{Rem}

\begin{Rem} [Comparison with Two-step Mixing]  Problem \ref{Prob:new-low-cont-multi} with $L=L_{\max}$ and $\beta_{kij}(l,m)=1$  instead of \eqref{eqn:mix-beta-multi}, is equivalent to the minimum-cost flow rate control  problem  in the second step of the two-step mixing approach  in \cite{Lun04networkcoding}. Thus,  the minimum network cost of Problem \ref{Prob:new-low-cont-multi} with $L=L_{\max}$ is no greater than the minimum cost of the  two-step mixing approach in \cite{Lun04networkcoding}.\end{Rem}

%
%
%

\begin{Exam}[Illustration of Linear Network Mixing] We illustrate a feasible mixing design (corresponding to a feasible solution) to   Problem \ref{Prob:new-low-cont-multi} with $L=2$ for the example in Fig.~\ref{Fig:para_cont}.  For ease of illustration,  in this example, we consider unit source rate and do not consider flow splitting and coding over time. For source edges (1,6), (1,4), (2,7), (2,4) and (3,4),  choose the global mixing vectors as follows: $\mathbf x_{16}(l)=\mathbf x_{14}(l)=(1,0,0)$, $\mathbf x_{24}(l)=\mathbf x_{27}(l)=(0,1,0)$ and $\mathbf x_{34}(l)=(0,0,1)$ for all $l=1,2$.  In addition,  choose the local coding coefficients as follows:   $\beta_{145}(1,1)=\beta_{245}(1,1)=\beta_{345}(1,2)=1$, $\beta_{145}(2,1)=\beta_{245}(2,1)=\beta_{345}(2,2)=0$,  $\beta_{145}(m,2)=\beta_{245}(m,2)=\beta_{345}(m,1)=0$ for all  $m=1,2$,  $\beta_{456}(1,1)=1$, $\beta_{456}(2,1)=\beta_{456}(1,2)=\beta_{456}(2,2)=0$, $\beta_{457}(1,1)=\beta_{457}(2,2)=1$ and  $\beta_{457}(1,2)=\beta_{457}(2,1)=0$. Therefore, for edges (4,5), (5,6) and (5,7) not outgoing from a source, the global mixing vectors are given by $\mathbf x_{45}(1)=(1,1,0)$, $\mathbf x_{45}(2)=(0,0,1)$, $\mathbf x_{56}(1)=(1,1,0)$, $\mathbf x_{56}(2)=(0,0,0)$, $\mathbf x_{57}(1)=(1,1,0)$ and $\mathbf x_{57}(2)=(0,0,1)$. On the other hand, flow paths  (sets of edge-mixing index pairs $((i,j),l)$ for which the rates of delivering  flows  are one) from the three sources, i.e., $\{((i,j),l):f_{ij,p}^t(l)=1,\  (i,j)\in \mathcal E, \ l=1,\cdots, L\}$ for all $p\in \mathcal P_t$ and $t\in \mathcal T$,   are illustrated using green, blue and pink curves in Fig.~\ref{Fig:para_cont}. Accordingly, choose the transmission rates as follows: 
$z_{ij}(1)=1$ and $z_{ij}(2)=0$ for all $(i,j)=(1,6), (1,4), (2,7), (2,4), (3,4)$, $z_{45}(1)=z_{45}(2)=z_{56}(1)=z_{57}(1)=z_{57}(2)=1$, $z_{56}(2)=0$, and $z_{ij}=z_{ij}(1)+z_{ij}(2) $ for all $(i,j)\in \mathcal E$.\label{example-mixing}
\end{Exam}

The following lemma shows the existence of a feasible linear network code corresponding to  Problem \ref{Prob:new-low-cont-multi}. 
\begin{Lem} Suppose Problem \ref{Prob:new-low-cont-multi} is feasible. Then, for each feasible solution, there exists  a feasible linear network code with a field size $F>T$ to deliver the desired flows to each terminal. \label{Lem:feasibility-new-low-opt-cont}
\end{Lem}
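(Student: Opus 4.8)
The plan is to show that a random linear network code, with local coding coefficients $\alpha_{kij}$ chosen uniformly from a field $\mathcal F$ with $F > T$ (subject to $\alpha_{kij} = 0$ whenever the corresponding mixing coefficient vanishes), succeeds with positive probability, hence that a good deterministic choice exists. The starting point is a feasible solution of Problem~\ref{Prob:new-low-cont-multi}, which provides, for each terminal $t$ and each demanded flow $p \in \mathcal P_t$, a set of flow values $\{f_{ij,p}^t(l)\}$ forming (after scaling by $1/R_p$ and clearing denominators) an integral flow of rate $R_p$ from $s_p$ to $t$ that lives only on edge--index pairs $((i,j),l)$ with $x_{ij,p}(l) = 1$. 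Collecting these over all $p \in \mathcal P_t$ and respecting the capacity constraints \eqref{eqn:mix-f-z-multi}--\eqref{eqn:mix-f-z-sum-multi}, one obtains, for each terminal $t$, a subgraph carrying $\sum_{p\in\mathcal P_t} R_p$ units of flow from the ``super-source'' of $\mathcal P_t$ to $t$; the point of condition \eqref{eqn:mix-x-dest-multi} (equivalently Condition~3 of Definition~\ref{Def:feasibility-mixing}) is that \emph{no} extraneous flow $p \notin \mathcal P_t$ is mixed into the symbols arriving at $t$, so the global coding vectors seen by $t$ are supported on $\mathcal P_t$.

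The core step is a transfer-matrix / Edmonds-matrix argument in the style of Ho et al.\ and Koetter--M\'edard. For a fixed terminal $t$, the symbols arriving on its incoming edges are linear combinations of the source symbols $\{\sigma_p : p \in \mathcal P_t\}$ with coefficients that are polynomials in the indeterminates $\{\alpha_{kij}\}$ (the $\beta$'s and $x$'s fix which monomials may appear). The flow decomposition above exhibits, inside the network, $\sum_{p \in \mathcal P_t} R_p$ edge-disjoint paths realizing a one-to-one routing of the (scaled) source symbols to $t$; setting the indeterminates along those paths to $1$ and the rest to $0$ shows the relevant $(\sum R_p) \times (\sum R_p)$ transfer matrix is the identity for that assignment, so its determinant is a nonzero polynomial $g_t$ in the $\alpha$'s. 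Its degree is bounded by the number of edges, and more importantly it is of degree at most $1$ in each individual $\alpha_{kij}$ after the acyclicity of $\mathcal G$ is used to linearize (each edge variable appears at most linearly along any path in a DAG). Taking the product $g = \prod_{t\in\mathcal T} g_t$, this is a nonzero multivariate polynomial of degree at most $T$ in each variable.

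Then I would invoke the Schwartz--Zippel / Sparse-Zeros lemma in the form tailored to coordinate-wise degree (the combinatorial Nullstellensatz version): if each variable appears with degree at most $T$ and $F = |\mathcal F| > T$, then $g$ has a nonzero evaluation over $\mathcal F$, and in fact a uniformly random assignment avoids the zero set with probability at least $\prod (1 - \deg_i g / F) > 0$. Any such assignment (respecting $\alpha_{kij}=0$ when $\beta_{kij}(\cdot,\cdot)=0$, which is consistent because $g_t$ only uses monomials compatible with the mixing pattern) yields local coding coefficients for which every terminal's transfer matrix is invertible; scaling the continuous flow values back down by the common denominator and distributing symbols over the $L$ mixing vectors and over time as prescribed by $\{z_{ij}(l)\}$ gives a feasible linear network code. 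The main obstacle I anticipate is the bookkeeping that ties the continuous, possibly-irrational flow values $f_{ij,p}^t(l)$ of the optimization to an honest finite-field block code: one must pass to a common rational denominator $N$, treat each edge as carrying $N z_{ij}$ symbols per block, verify that the per-index capacities \eqref{eqn:mix-f-z-multi}--\eqref{eqn:mix-f-z-sum-multi} make this consistent, and check that the degree-in-each-variable bound survives this blocking up (it does, because distinct symbols use independent random coefficients but the per-symbol transfer determinants still have coordinate degree $\le 1$ per edge, and only $T$ terminals multiply together). Everything else — the existence of the edge-disjoint path decomposition from the flow constraints, the nonvanishing of each $g_t$, and the counting bound $F > T$ — is standard once that reduction is set up.
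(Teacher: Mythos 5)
Your core decodability argument (per-terminal sub-flow path decomposition, multilinearity of each terminal's transfer determinant $g_t$, the product $\prod_{t}g_t$ having degree at most $T$ in each variable, and a sparse-zeros/Schwartz--Zippel step with $F>T$) is essentially the standard algebraic machinery that the paper itself does not re-derive but delegates to Lemma 1 of \cite{CUI2015ISITreport}: the paper's proof consists of a reduction of the continuous, multi-$L$ case to that integer, unit-rate lemma, whereas you inline the algebraic content. So in spirit the two arguments agree; the divergence, and the problem, is in how the continuous feasible solution is turned into an integral instance.

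You propose to ``pass to a common rational denominator $N$'' and clear denominators, but a feasible solution of Problem \ref{Prob:new-low-cont-multi} is an arbitrary point of a continuous feasible set, so $R_p$, $z_{ij}(l)$ and $f_{ij,p}^{t}(l)$ need not be rational; no such $N$ exists in general, and the lemma quantifies over \emph{every} feasible solution. This step would therefore fail, and the missing idea is the paper's rounding-over-time device: code over $n$ time units, split each source into $\lfloor nR_p\rfloor$ unit-rate sub-sources, allow at most $\lceil nf_{ij,p}^{t}\rceil$ sub-flows of flow $p$ toward $t$ per edge, and split each edge into $\bar z_{ij}=\max_{t}\sum_{p\in\mathcal P_t}\lceil nf_{ij,p}^{t}\rceil$ sub-edges, so that the per-edge rate overshoot is at most $P/n$ and the nominal rates and cost are recovered only as $n\to\infty$ (with each edge further replicated $L$ times when $L>1$). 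A second, smaller soft spot: when you ``set the indeterminates along those paths to $1$,'' you implicitly assume a path decomposition of $\{f_{ij,p}^{t}(l)\}$ that uses only node transitions $((k,i),m)\to((i,j),l)$ with $\beta_{kij}(m,l)=1$; since the conservation constraint \eqref{eqn:mix-f-conv-multi} is aggregated over $l$, this compatibility is not automatic from feasibility and needs an explicit argument (the paper sidesteps it by importing the sub-source/sub-edge construction of \cite{CUI2015ISITreport} wholesale), so your parenthetical claim that $g_t$ automatically contains the needed monomials is doing more work than it acknowledges.
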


\begin{proof}Please refer to Appendix A.
\end{proof}

Note that a feasible linear network code can be obtained from a feasible linear network mixing design (a feasible solution to  Problem \ref{Prob:new-low-cont-multi}) using random linear network coding\cite{Hoetal06}, as illustrated in Section~\ref{subsec:coding-mixing}. 

\begin{Exam}[Illustration of Linear Network Coding] We illustrate how to obtain a feasible linear network code using random linear network coding, based on the feasible linear network mixing design illustrated in Example \ref{example-mixing}. In this example,  one local mixing coefficient (global mixing vector)  corresponds to one local coding coefficient (global coding vector).\footnote{Note that when flow splitting or coding over time happens, one local mixing coefficient (global mixing vector) may correspond to multiple local coding coefficients (global coding vectors). In this case, a linear network code can be designed in a similar way based on the sub-flows and sub-edges established in the proof of Lemma \ref{Lem:feasibility-new-low-opt-cont}.} For the source edges, choose the global coding vectors as follows: $\mathbf c_{ij}(l)=\mathbf x_{ij}(l)$ for all $(i,j)=(1,6), (1,4), (2,7), (2,4), (3,4)$ and  $l=1,2$. 
 For all $l,m=1,\cdots, L$ and $(k,i),(i,j)\in \mathcal E$, if $\beta_{kij}(l,m)=0$, choose $\alpha_{kij}(l,m)=0$; if $\beta_{kij}(l,m)=1$, choose $\alpha_{kij}(l,m)$ uniformly at random from $\mathcal F$. 
Therefore, for the edges  not outgoing from a source, the global coding vectors are given by $\mathbf c_{ij}(l)=\sum_{k\in \mathcal I_i, m=1,\cdots, L}\alpha_{kij}(m,l)\mathbf c_{ki}(m)$ for all $(i,j)=(4,5), (5,6),(5,7)$ and $l=1,\cdots, L$.
\end{Exam}

\subsection{Network Cost and Complexity Tradeoff}

 The design parameter $L$ in Problem \ref{Prob:new-low-cont-multi} determines the complexity and network cost tradeoff. First, 
we  illustrate the impact of  $L$ on the complexity of Problem \ref{Prob:new-low-cont-multi}. 
By \eqref{eqn:mix-beta-multi}, we know that for given $(k,i),(i,j)\in \mathcal E$, the cardinality of  $\{\beta_{kij}(l,m):l,m=1,\cdots, L\}$ is $L^2$. 
Since  $\sum_{(i,j)\in \mathcal E} O_j=\sum_{j\in \mathcal V}I_j O_j\leq \sum_{j\in \mathcal V}DO_j = D E$,  the cardinality of   $\{\beta_{kij}(l,m):l,m=1,\cdots, L,\ (k,i),(i,j)\in \mathcal E\}$ is smaller than or equal to $L^2DE$. 
Note that by \eqref{eqn:f-x-src-multi} and \eqref{eqn:mix-x-inter-multi}, $ \{ x_{ij,p}(l)\}$ can be  fully determined   by  $\{\beta_{kij}(l,m)\}$. Therefore, the cardinality of  the discrete variables  $ \{ x_{ij,p}(l)\}$ and $\{\beta_{kij}(l,m)\}$  of Problem \ref{Prob:new-low-cont-multi} is $L^2DE$, which increases as $L$ increases.   

Next, we discuss the impact of   $L$ on the  network cost.

\begin{Lem}
If Problem \ref{Prob:new-low-cont-multi} is feasible for design parameter $L$, then Problem \ref{Prob:new-low-cont-multi} is feasible for design parameter $L+1$ and $ {U_{x}^*}(L+1)\leq  {U_{x}^*}(L)$, where $L=1,\cdots, L_{\max}-1$. \label{Lem:comp-L-cont-multi}
\end{Lem}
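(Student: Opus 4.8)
The plan is to prove this by a direct padding argument: take any feasible solution of Problem~\ref{Prob:new-low-cont-multi} for parameter $L$ and lift it to a feasible solution for parameter $L+1$ whose value of $z_{ij}$ on every edge — and hence whose objective $\sum_{(i,j)\in\mathcal E}U_{ij}(z_{ij})$ — is unchanged. Since every $L$-feasible point induces an $(L+1)$-feasible point with the same cost, Problem~\ref{Prob:new-low-cont-multi} is feasible for $L+1$ whenever it is feasible for $L$, and taking the infimum over the $L$-feasible solutions gives $U_x^*(L+1)\le U_x^*(L)$.

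Concretely, given $\{z_{ij}\},\{z_{ij}(l)\},\{f_{ij,p}^{t}(l)\},\{x_{ij,p}(l)\},\{\beta_{kij}(l,m)\}$ feasible for parameter $L$, I would retain all of these variables on the indices $l,m\in\{1,\dots,L\}$ and append one extra ``empty'' layer indexed by $L+1$: set $z_{ij}(L+1)=0$ and $f_{ij,p}^{t}(L+1)=0$ for every edge, flow and terminal; set the new global mixing vector $\mathbf x_{ij}(L+1)=\mathbf x_{ij}(1)$ for every edge $(i,j)\in\mathcal E$ (the first layer exists because $L\ge 1$); and set the new local mixing coefficients by $\beta_{kij}(m,L+1)=\beta_{kij}(m,1)$ for $m=1,\dots,L$, together with $\beta_{kij}(L+1,l)=0$ for $l=1,\dots,L$ and $\beta_{kij}(L+1,L+1)=0$. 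The variables $\{z_{ij}\}$ are untouched, so the cost of the lifted point equals the cost of the original point.

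It then remains to verify \eqref{eqn:mix-z-multi}--\eqref{eqn:mix-x-dest-multi} on the enlarged index range. The box, integrality and nonnegativity constraints \eqref{eqn:mix-z-multi}, \eqref{eqn:mix-x-multi}, \eqref{eqn:mix-beta-multi}, \eqref{eqn:mix-f-multi} are immediate for the new layer; \eqref{eqn:mix-f-z-multi} holds for $l=L+1$ since both sides are $0$; \eqref{eqn:mix-f-z-sum-multi} holds since layer $L+1$ adds $0$ to the sum; \eqref{eqn:mix-f-conv-multi} is unchanged because the $l=L+1$ terms vanish; and \eqref{eqn:mix-f-x-cont-multi} holds for $l=L+1$ because its left side is $0\le x_{ij,p}(L+1)B_{ij}$. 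The source condition \eqref{eqn:f-x-src-multi} and the terminal condition \eqref{eqn:mix-x-dest-multi} hold for $l=L+1$ since $\mathbf x_{ij}(L+1)=\mathbf x_{ij}(1)$ and both already hold for layer $1$. The only point needing care is the recursion \eqref{eqn:mix-x-inter-multi}: for a non-source edge $(i,j)$ and $l\le L$ the extra summand $\beta_{kij}(L+1,l)\mathbf x_{ki}(L+1)=\mathbf 0$, so the disjunction is unchanged and still equals $\mathbf x_{ij}(l)$; and for $l=L+1$, since $\beta_{kij}(L+1,L+1)=0$ kills its one extra summand, $\vee_{k\in\mathcal I_i,\,m=1,\dots,L+1}\beta_{kij}(m,L+1)\mathbf x_{ki}(m)=\vee_{k\in\mathcal I_i,\,m=1,\dots,L}\beta_{kij}(m,1)\mathbf x_{ki}(m)=\mathbf x_{ij}(1)=\mathbf x_{ij}(L+1)$.

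I do not expect a genuine obstacle; the construction is ``add a layer that carries nothing,'' and the only thing to be careful about is the bookkeeping in \eqref{eqn:mix-x-inter-multi}, where the local coefficients couple different layers, so one must choose the new entries $\beta_{kij}(\cdot,L+1)$ and $\beta_{kij}(L+1,\cdot)$ so that neither the old layers nor the new layer picks up a spurious contribution — which the zero/copy choice above guarantees.
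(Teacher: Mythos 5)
Your proof is correct and follows essentially the same route as the paper: pad the $L$-feasible solution with an inert $(L+1)$-th layer carrying zero rate and zero flow, leave $\{z_{ij}\}$ untouched so the cost is unchanged, and conclude $U_x^*(L+1)\le U_x^*(L)$ by optimizing over the larger feasible set. Your bookkeeping is in fact slightly more careful than the paper's one-line ``set the new-index variables to zero,'' since copying $\mathbf x_{ij}(1)$ and $\beta_{kij}(\cdot,1)$ into the new layer handles the source constraint \eqref{eqn:f-x-src-multi} and the recursion \eqref{eqn:mix-x-inter-multi} explicitly, but the underlying argument is the same.
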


\begin{proof} Given a feasible solution to Problem \ref{Prob:new-low-cont-multi} with design parameter $L$, by setting variables w.r.t. index $l=L+1$ or $m=L+1$ to be zero, we can easily construct a feasible solution to Problem \ref{Prob:new-low-cont-multi} with design parameter $L+1$. This feasible solution corresponds to the same network cost as the one with design parameter $L$. But the network cost with design parameter $L+1$ can be further optimized by solving Problem \ref{Prob:new-low-cont-multi} with design parameter $L+1$. Therefore, we can show $ {U_{x}^*}(L+1)\leq  {U_{x}^*}(L)$ for all $L=1,\cdots, L_{\max}-1$. 
\end{proof}

By Lemma \ref{Lem:comp-L-cont-multi}, we know that  the network cost ${U_{x}^*}(L)$ is non-increasing w.r.t. $L$.  
This can also be understood from the example in Fig. \ref{Fig:para_cont}. Note that by Condition 3) in Definition \ref{Def:feasibility-mixing}, flow 3 is not allowed to be mixed with flow 1 and flow 2 on their paths to terminal $t_1$. When $L=1<L_{\max}$, flow 3 cannot be delivered over edge $(4,5)$ to terminal $t_2$ using feasible mixing. In other words, Problem \ref{Prob:new-low-cont-multi} with $L=1$ is not feasible (i.e., of infinite network cost). However, when $L=2=L_{\max}$, flow 3 can be delivered to  terminal $t_2$ without mixing with flow 1 and flow 2 over edge $(4,5)$, e.g., using global mixing vectors $\mathbf x_{45}(1)=(1,1,0)$ and $\mathbf x_{45}(2)=(0,0,1)$ over edge $(4,5)$. In other words, Problem \ref{Prob:new-low-cont-multi} with $L=2$ is feasible (i.e., of finite network cost). Thus, we can see the impact of $L$ on the  network cost shown in Lemma \ref{Lem:comp-L-cont-multi}.

%

\section{Alternative Formulation with Discrete Mixing}\label{subsec:alt-cont-mix-dis}

Problem \ref{Prob:new-low-cont-multi} is a mixed discrete-continuous optimization problem with two main challenges. One is the choice of the   network mixing  coefficients  (discrete variables), and the other is the choice of the flow rates (continuous variables).  In this section, we first propose an equivalent alternative formulation of Problem \ref{Prob:new-low-cont-multi} which  naturally subdivides Problem \ref{Prob:new-low-cont-multi} according to these two aspects.  Then, we propose a distributed algorithm to solve it.

\subsection{Alternative Formulation}

Problem \ref{Prob:new-low-cont-multi} is equivalent to the following problem.

\begin{Prob}[Equivalent Problem of Problem \ref{Prob:new-low-cont-multi}]
\begin{align}
 {U_{x}^*}(L)=&\min_{\{x_{ij,p}(l)\}\in \boldsymbol{\mathcal M}(L)}\quad   U_{x}^*(\{x_{ij,p}(l)\})\nonumber
\end{align}
where $U_{x}^*(\{x_{ij,p}(l)\})$ and $\boldsymbol{\mathcal M}(L)$ are given by the following two subproblems. \label{Prob:equ-mix}
\end{Prob}

\begin{Subp} [Subproblem of Problem \ref{Prob:equ-mix}: Flow Optimization] For given $\{ x_{ij,p}(l)\}$, we have: 
\begin{align}
U_{x}^*(\{ x_{ij,p}(l)\})&=\min_{\{z_{ij}\},\{z_{ij}(l)\} ,\{f_{ij,p}^{t}(l)\}}\quad  \sum_{(i,j)\in \mathcal E} U_{ij}(z_{ij})\nonumber\\
s.t.\quad & \eqref{eqn:mix-z-multi}, \eqref{eqn:mix-f-multi}, \eqref{eqn:mix-f-z-multi}, \eqref{eqn:mix-f-z-sum-multi}, \eqref{eqn:mix-f-conv-multi},\eqref{eqn:mix-f-x-cont-multi}\nonumber
\end{align}
 \label{Prob:flowopt}
\end{Subp}

\begin{Subp} [Subproblem of Problem \ref{Prob:equ-mix}: Feasible Mixing]
Find the set $\boldsymbol{\mathcal M}(L)\triangleq\{\{ x_{ij,p}(l)\}:\eqref{eqn:mix-x-multi}, \eqref{eqn:mix-beta-multi},\eqref{eqn:f-x-src-multi},\eqref{eqn:mix-x-inter-multi},\eqref{eqn:mix-x-dest-multi},\eqref{eqn:mix-x-dest-multi-add}\}$ of feasible $\{ x_{ij,p}(l)\}$, where \eqref{eqn:mix-x-dest-multi-add} is given by:
\begin{align}
&\vee_{i\in \mathcal I_t, l=1,\cdots, L} x_{it,p}(l)=1,  \ p \in \mathcal P_t, \ t\in \mathcal T.\label{eqn:mix-x-dest-multi-add}
\end{align}
\label{Prob:feasible-set}
\end{Subp}

Note that for given $\{ x_{ij,p}(l)\}$, Subproblem \ref{Prob:flowopt} is a convex optimization problem (involving continuous flow rates) and hence has  polynomial-time complexity.  On the other hand, Subproblem \ref{Prob:feasible-set} is a discrete feasibility problem (involving discrete mixing coefficients) and is NP-complete in general. Therefore, Problem \ref{Prob:equ-mix} is still a mixed discrete-continuous optimization problem and is NP-complete in general. 

\subsection{Distributed Solution}\label{subsec:equ-mix-dist}

In this part, we develop a distributed algorithm to solve Problem \ref{Prob:equ-mix} by solving Subproblem \ref{Prob:flowopt}  and Subproblem \ref{Prob:feasible-set}, respectively, in a distributed manner. First, we consider  Subproblem \ref{Prob:feasible-set}.
Subproblem \ref{Prob:feasible-set} can be treated as a CSP and solved distributively using  clause partition and the Communication-Free Learning (CFL) algorithm from \cite{cfl}.  While CSPs
are NP-complete in general, CFL provides  a probabilistic distributed iterative 
algorithm with almost sure convergence in finite time.  
Specifically, $\{ x_{ij,p}(l)\}\cup\{\beta_{kij}(l,m)\}$ can be treated as the variables of the CSP. 
$\{0,1\}$ can be treated as the finite set  of the CSP. 
From \eqref{eqn:mix-x-inter-multi}, we have an equivalent constraint purely on $\{ x_{ij,p}(l)\}$, i.e.,
\begin{align}
&\exists \ \beta_{kij}(m,l)\in \{0,1\} \ \forall   k\in \mathcal I_i, m=1,\cdots, L,\nonumber\\
& \text{s.t.}\  \mathbf x_{ij}(l)=\vee_{k\in \mathcal I_i, m=1,\cdots, L} \beta_{kij}(m,l) \mathbf x_{ki}(m),  \nonumber\\
&\hspace{10mm} l=1,\cdots, L, \ (i,j)\in \mathcal E, \ i \not\in\mathcal S.\label{eqn:mix-x-inter-multi-exist}
\end{align}
In the following, we shall only consider solving for the  variables $\{ x_{ij,p}(l)\}$ of the CSP in a distributed way using clause partition and CFL, as 
$\{\beta_{kij}(l,m)\}$ can be obtained from feasible $\{ x_{ij,p}(l)\}$ by \eqref{eqn:f-x-src-multi} and \eqref{eqn:mix-x-inter-multi}.  In addition, we directly choose $\mathbf x_{s_pj}(l)=\mathbf e_p$ for all $l=1,\cdots, L$, $(s_p,j)\in \mathcal E $ and  $p\in \mathcal P$ according to \eqref{eqn:f-x-src-multi}.

For notational simplicity, we write the clauses for $\{ x_{ij,p}(l)\}$ in a more compact form as follows:
\begin{align}
&\phi^{x,l}_{ij,p}\Big( \mathbf x_{ij}(l), \{\mathbf x_{ki}(m):m=1,\cdots, L, k\in\mathcal I_i\},\nonumber\\
&\quad \quad \ \{\mathbf x_{kj}(m):m=1,\cdots, L, k\in\mathcal I_j,j\in\mathcal T\}\Big)\nonumber\\
\triangleq&\begin{cases}
1, &\text{if $j\not\in\mathcal T$, \eqref{eqn:mix-x-inter-multi-exist} holds}\\
1, &\text{if $j\in\mathcal T$ and $p\in\mathcal P_j$, \eqref{eqn:mix-x-inter-multi-exist} and   \eqref{eqn:mix-x-dest-multi-add}
 hold}\\
 1, &\text{if $j\in\mathcal T$ and $p\not\in\mathcal P_j$, \eqref{eqn:mix-x-inter-multi-exist}  and  \eqref{eqn:mix-x-dest-multi} hold}\\
0, & \text{otherwise}
\end{cases}\nonumber\\
& \quad l=1,\cdots,L,\ (i,j)\in \mathcal E,\ p\in\mathcal P,\ i\not\in S.\label{eqn:phi-x-multi}
\end{align}
Note that, when $j\not\in\mathcal T$, $\{\mathbf x_{kj}(m):m=1,\cdots, L,k\in\mathcal I_j,j\in\mathcal T\}=\emptyset$ and we ignore it in the clause $\phi^{x,l}_{ij,p}(\cdot)$.
For \eqref{eqn:mix-x-dest-multi-add} and \eqref{eqn:mix-x-dest-multi} in clause $\phi^{x,l}_{ij,p}(\cdot)$, we use $j$ as the terminal index instead of $t$. It can be seen that the constraints in 
 \eqref{eqn:mix-x-inter-multi} (i.e., \eqref{eqn:mix-x-inter-multi-exist}), \eqref{eqn:mix-x-dest-multi} and \eqref{eqn:mix-x-dest-multi-add} are considered in clause $\phi^{x,l}_{ij,p}(\cdot)$.  In addition, the constraint in \eqref{eqn:f-x-src-multi} is considered when choosing $\mathbf x_{s_pj}(l)=\mathbf e_p$ for all $l=1,\cdots,L$, $(s_p,j)\in \mathcal E $ and  $p\in \mathcal P$. Therefore, the CSP has considered all the constraints in Subproblem \ref{Prob:feasible-set}.

We now construct the clause partition of Subproblem \ref{Prob:feasible-set}. Specifically, the set of clauses variable $x_{ij,p}(l)$ participates in is as follows: 
\begin{align}
\Phi^{x,l}_{ij,p}\triangleq&\left\{\phi^{x,l}_{ij,p}\right\}\cup\left\{\phi^{x,m}_{jk,p}:m=1,\cdots, L, k\in\mathcal O_j\right\}\nonumber\\
&\cup\left\{\phi^{x,m}_{kj,p}:m=1,\cdots, L, k\in\mathcal I_j,j\in\mathcal T\right\}\nonumber\\
&\quad  l=1,\cdots,L,\ (i,j)\in \mathcal E,\ p\in\mathcal P,\ i\not\in S.\label{eqn:phi-x-part-multi}
\end{align}
Note that, when $j\not\in\mathcal T$, $\left\{\phi^{x,m}_{kj,p}:m=1,\cdots, L,k\in\mathcal I_j,j\in\mathcal T\right\}=\emptyset$ and we ignore it in $\Phi^{x,l}_{ij,p}$ in \eqref{eqn:phi-x-part-multi}. 

Based on the clause partition, a feasible $\{x_{ij,p}(l)\}\in \boldsymbol{\mathcal M}(L)$ to Subproblem \ref{Prob:feasible-set}  can be found distributively using the  probabilistic distributed iterative CFL algorithm \cite[Algorithm 1]{cfl}.  Specifically, for all $(i,j)\in \mathcal E$, $p\in\mathcal P$ and $l=1,\cdots, L$,  in each iteration, each node $i$ realizes a Bernoulli random variable selecting  $x_{ij,p}(l)$; messages on  $\{x_{ij,p}(l)\}$ are   passed between adjacent nodes for each node $i$ to evaluate its related clauses in \eqref{eqn:phi-x-part-multi};  based on the evaluation, each node $i$ updates the distribution of the Bernoulli  random variable selecting  $x_{ij,p}(l)$.  
Given a feasible $\{x_{ij,p}(l)\}\in\boldsymbol{ \mathcal M}(L)$ obtained by CFL,   Subproblem \ref{Prob:flowopt} is convex and can be solved distributively using standard convex decomposition. We omit the details here due to the page limitation.
 
Now, we can develop  a distributed algorithm to solve  Problem \ref{Prob:equ-mix} based on CFL and convex decomposition, as briefly illustrated in Algorithm \ref{alg:equ-mix}.\footnote{
In Step 3, CFL is run for a sufficiently long time. Step 4 (Step 6) can be implemented  with a master node obtaining the network convergence information of CFL (network cost) from all nodes or with all nodes computing the average convergence indicator of CFL (average network cost) locally via a gossip algorithm.}

\begin{algorithm}[h]
\caption{Algorithm for Problem \ref{Prob:equ-mix}}
\begin{algorithmic}[1]
\STATE \textbf{initialize}    $n=1$ and $U_1=+\infty$. 
\LOOP
\STATE Run CFL  to  the CSP corresponding to Subproblem \ref{Prob:feasible-set}.
 \IF{\text{the CFL finds a feasible solution}} 
 \STATE For the obtained feasible solution to Subproblem \ref{Prob:feasible-set}, solve Subproblem \ref{Prob:flowopt} distributively using convex decomposition.  Let $\bar U_n$ denote the corresponding network cost. 
 \IF{\text{$\bar U_n<U_n$}}
 \STATE set $U_{n+1}=\bar U_n$ and $n=n+1$.
\ENDIF
 \ENDIF 
\ENDLOOP
\end{algorithmic}\label{alg:equ-mix}
\end{algorithm}

Based on the convergence result of CFL \cite[Corollary 2]{cfl}, we can easily see that $U_n\to U_x^*(L)$ almost surely as $n\to \infty$, if Problem \ref{Prob:equ-mix}  is feasible.




\section{Alternative Formulation with Continuous Mixing}\label{subsec:alt-cont-mix-cont}

The complexity of solving Problem \ref{Prob:equ-mix}  mainly lies in solving for the  network mixing coefficients (discrete variables) in Subproblem \ref{Prob:feasible-set}.  In this section, we first propose an equivalent alternative formulation of Problem \ref{Prob:new-low-cont-multi}  (Problem \ref{Prob:equ-mix}) with continuous mixing. Then, we elaborate on some distributed algorithms to solve it.

\subsection{Alternative Formulation} 

Problem \ref{Prob:new-low-cont-multi}  (Problem \ref{Prob:equ-mix})  is a mixed discrete-continuous optimization problem.  Applying continuous relaxation to \eqref{eqn:mix-x-multi} and \eqref{eqn:mix-beta-multi}  and manipulating  \eqref{eqn:mix-x-inter-multi}, we obtain the following continuous optimization problem. 

\begin{Prob} [Continuous  Formulation of Problem \ref{Prob:new-low-cont-multi}]
\begin{align}
 {\bar U_{x}^*}(L)&=\min_{\substack{\{z_{ij}\},\{z_{ij}(l)\} \{f_{ij,p}^{t}(l)\}\\ \{\bar{x}_{ij,p}(l)\},\{\bar\beta_{kij}(l,m)\}}}\quad  \sum_{(i,j)\in \mathcal E} U_{ij}(z_{ij})\nonumber\\
s.t.\  
&\eqref{eqn:mix-z-multi},\eqref{eqn:mix-f-multi}, \eqref{eqn:mix-f-z-multi},\eqref{eqn:mix-f-z-sum-multi}, \eqref{eqn:mix-f-conv-multi},\eqref{eqn:f-x-src-multi},\eqref{eqn:mix-x-dest-multi}\nonumber\\
&\bar x_{ij,p} (l)\in [0,1], \   l=1,\cdots, L,\ (i,j)\in \mathcal E,\ p\in \mathcal P\label{eqn:mix-x-multi-cont}\\
& \bar{\beta}_{kij}(l,m)\in [0,1], \  l,m=1,\cdots, L, \ (k,i),  (i,j) \in \mathcal E \label{eqn:mix-beta-multi-cont}\\
& f_{ij,p}^{t}(l)\leq \bar x_{ij,p}(l)B_{ij},  \  l=1,\cdots, L,\ (i,j)\in \mathcal E,\nonumber\\
& \hspace{32mm}   p \in \mathcal P_t, \ t\in \mathcal T\label{eqn:mix-f-x-cont-multi-cont}\\
&\bar x_{ij,p}(m)\geq  \bar \beta_{kij}(l,m) \bar x_{ki,p}(l), \ k\in \mathcal I_i, \ l=1,\cdots, L\nonumber\\
&  \hspace{2mm}\quad   m=1,\cdots, L,\  \mathcal I_i \neq \emptyset, \ (i,j)\in \mathcal E, p\in \mathcal P\label{eqn:mix-x-inter-multi-cont-1}\\
&\bar x_{ij,p}(m)\leq \sum_{k\in \mathcal I_i, l=1,\cdots, L} \bar\beta_{kij}(l,m) \bar x_{ki,p}(l), \nonumber\\
&  \hspace{2mm}\quad   m=1,\cdots, L,\  \mathcal I_i \neq \emptyset, \ (i,j)\in \mathcal E, p\in \mathcal P\label{eqn:mix-x-inter-multi-cont-2}
\end{align}
\label{Prob:new-low-cont-multi-cont}
\end{Prob}

Note that Constraints \eqref{eqn:mix-x-multi-cont} and \eqref{eqn:mix-beta-multi-cont}  in Problem \ref{Prob:new-low-cont-multi-cont}
can be treated as the continuous relaxation of Constraints \eqref{eqn:mix-x-multi} and \eqref{eqn:mix-beta-multi}  in Problem \ref{Prob:new-low-cont-multi}. 
Constraint \eqref{eqn:mix-f-x-cont-multi-cont} in Problem \ref{Prob:new-low-cont-multi-cont} corresponds to 
 Constraint \eqref{eqn:mix-f-x-cont-multi} in Problem \ref{Prob:new-low-cont-multi}. 
Constraints \eqref{eqn:mix-x-inter-multi-cont-1} and \eqref{eqn:mix-x-inter-multi-cont-2} in Problem \ref{Prob:new-low-cont-multi-cont}  can be treated as the  continuous counterpart of Constraint \eqref{eqn:mix-x-inter-multi} in Problem \ref{Prob:new-low-cont-multi}. 
The following lemma shows the relationship between Problem \ref{Prob:new-low-cont-multi} (mixed discrete-continuous optimization problem) and Problem \ref{Prob:new-low-cont-multi-cont} (continuous optimization problem).

\begin{Lem} [Relationship between  Problem \ref{Prob:new-low-cont-multi} and Problem \ref{Prob:new-low-cont-multi-cont}] (i) If
$\{z_{ij}\},\{z_{ij}(l)\} ,\{f_{ij,p}^{t}(l)\}, \{ x_{ij,p}(l)\},\{\beta_{kij}(l,m)\}$ is a feasible solution to  Problem \ref{Prob:new-low-cont-multi}, then $\{z_{ij}\},\{z_{ij}(l)\} ,\{f_{ij,p}^{t}(l)\}, \{\bar{x}_{ij,p}(l)\},\{\bar \beta_{kij}(l,m)\}$ is a feasible solution to Problem \ref{Prob:new-low-cont-multi-cont}, where $\bar x_{ij,p}(l)=x_{ij,p}(l)$ and $\bar \beta_{kij}(l,m)=\beta_{kij}(l,m)$; if $\{z_{ij}\},\{z_{ij}(l)\} ,\{f_{ij,p}^{t}(l)\}, \{\bar{x}_{ij,p}(l)\},\{\bar \beta_{kij}(l,m)\}$ is a feasible solution to Problem \ref{Prob:new-low-cont-multi-cont}, then $\{z_{ij}\},\{z_{ij}(l)\}, \{f_{ij,p}^{t}(l)\}, \{x_{ij,p}(l)\},\{\beta_{kij}(l,m)\}$ is a feasible solution to  Problem \ref{Prob:new-low-cont-multi}, where $x_{ij,p}(l)=\lceil\bar x_{ij,p}(l)\rceil$ and $\beta_{kij}(l,m)=\lceil\bar\beta_{kij}(l,m)\rceil$. 
(ii)  The feasibilities of Problem \ref{Prob:new-low-cont-multi} and Problem \ref{Prob:new-low-cont-multi-cont} imply each other. 
(iii) 
The optimal values of Problem \ref{Prob:new-low-cont-multi} and Problem \ref{Prob:new-low-cont-multi-cont} are the same, i.e., $ { U_{x}^*}(L)= {\bar U_{x}^*}(L)$. 
\label{Lem:relation-cont-relax}
\end{Lem}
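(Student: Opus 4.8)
The plan is to establish all three parts of Lemma~\ref{Lem:relation-cont-relax} from one structural observation: the objective $\sum_{(i,j)\in\mathcal E}U_{ij}(z_{ij})$ together with all constraints that involve only the flow variables $\{z_{ij}\},\{z_{ij}(l)\},\{f_{ij,p}^{t}(l)\}$ --- namely \eqref{eqn:mix-z-multi}, \eqref{eqn:mix-f-multi}, \eqref{eqn:mix-f-z-multi}, \eqref{eqn:mix-f-z-sum-multi} and \eqref{eqn:mix-f-conv-multi} --- appear verbatim in both Problem~\ref{Prob:new-low-cont-multi} and Problem~\ref{Prob:new-low-cont-multi-cont}. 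Hence in both directions of part~(i) I would leave the flow variables unchanged and only check the constraints coupling $f$ to the mixing variables, i.e.\ \eqref{eqn:mix-f-x-cont-multi}/\eqref{eqn:mix-f-x-cont-multi-cont}, \eqref{eqn:f-x-src-multi}, \eqref{eqn:mix-x-dest-multi}, together with the mixing-structure constraints \eqref{eqn:mix-x-inter-multi} versus \eqref{eqn:mix-x-inter-multi-cont-1}--\eqref{eqn:mix-x-inter-multi-cont-2}.

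For the discrete-to-continuous direction of part~(i) I would set $\bar x_{ij,p}(l)=x_{ij,p}(l)$ and $\bar\beta_{kij}(l,m)=\beta_{kij}(l,m)$: then \eqref{eqn:mix-x-multi-cont}, \eqref{eqn:mix-beta-multi-cont} hold because $\{0,1\}\subseteq[0,1]$; \eqref{eqn:mix-f-x-cont-multi-cont} is literally \eqref{eqn:mix-f-x-cont-multi}; and \eqref{eqn:f-x-src-multi}, \eqref{eqn:mix-x-dest-multi} carry over unchanged. For the continuous-to-discrete direction I would round up, $x_{ij,p}(l)=\lceil\bar x_{ij,p}(l)\rceil$ and $\beta_{kij}(l,m)=\lceil\bar\beta_{kij}(l,m)\rceil$: then \eqref{eqn:mix-x-multi}, \eqref{eqn:mix-beta-multi} hold since the arguments lie in $[0,1]$; \eqref{eqn:f-x-src-multi} and \eqref{eqn:mix-x-dest-multi} are preserved because their right-hand sides are already integral; and the key step is the monotonicity $f_{ij,p}^{t}(l)\le\bar x_{ij,p}(l)B_{ij}\le\lceil\bar x_{ij,p}(l)\rceil B_{ij}=x_{ij,p}(l)B_{ij}$, which recovers \eqref{eqn:mix-f-x-cont-multi}. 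Part~(ii) then follows immediately from these two constructions.

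The hard part --- really the only nontrivial point --- is verifying that the pair \eqref{eqn:mix-x-inter-multi-cont-1}--\eqref{eqn:mix-x-inter-multi-cont-2} is precisely the ceiling-compatible linearization of the Boolean disjunction \eqref{eqn:mix-x-inter-multi}. I would argue componentwise in $p$, for each non-source node $i$ with $\mathcal I_i\neq\emptyset$, by a two-case split on $\lceil\bar x_{ij,p}(\cdot)\rceil$, using $\bar\beta\ge0$ and $\bar x\ge0$. If $\lceil\bar x_{ij,p}(m)\rceil=0$, then $\bar x_{ij,p}(m)=0$, so \eqref{eqn:mix-x-inter-multi-cont-1} forces every product $\bar\beta_{kij}(l,m)\bar x_{ki,p}(l)$ to vanish, hence $\bigvee_{k,l}\lceil\bar\beta_{kij}(l,m)\rceil\lceil\bar x_{ki,p}(l)\rceil=0$; if $\lceil\bar x_{ij,p}(m)\rceil=1$, then $\bar x_{ij,p}(m)>0$, so \eqref{eqn:mix-x-inter-multi-cont-2} forces at least one product to be strictly positive, hence the disjunction equals $1$. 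This shows that rounding up a feasible point of Problem~\ref{Prob:new-low-cont-multi-cont} satisfies \eqref{eqn:mix-x-inter-multi} (after matching the mixing-vector indices $l\leftrightarrow m$), and running the same case analysis in reverse shows any binary $(x,\beta)$ satisfying \eqref{eqn:mix-x-inter-multi} satisfies \eqref{eqn:mix-x-inter-multi-cont-1}--\eqref{eqn:mix-x-inter-multi-cont-2}. The one corner case to dispatch is a non-source node with no incoming edges: there \eqref{eqn:mix-x-inter-multi} forces its outgoing mixing vectors to $\mathbf 0$ while \eqref{eqn:mix-x-inter-multi-cont-1}--\eqref{eqn:mix-x-inter-multi-cont-2} are vacuous, but by \eqref{eqn:mix-f-conv-multi} all flow through such a node is zero, so it may be pruned without affecting feasibility or cost.

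Finally, for part~(iii): since neither construction in part~(i) changes $\{z_{ij}\}$, each feasible point of one problem maps to a feasible point of the other with the same objective value. Taking infima gives $\bar U_{x}^*(L)\le U_{x}^*(L)$ (from the discrete-to-continuous map) and $U_{x}^*(L)\le\bar U_{x}^*(L)$ (from the continuous-to-discrete map), whence $U_{x}^*(L)=\bar U_{x}^*(L)$; the jointly-infeasible case gives $+\infty=+\infty$ by part~(ii).
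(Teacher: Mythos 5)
Your proposal is correct and follows essentially the same route as the paper's Appendix B: keep $\{z_{ij}\},\{z_{ij}(l)\},\{f_{ij,p}^{t}(l)\}$ fixed, map the mixing variables by the identity in one direction and by $\lceil\cdot\rceil$ in the other, and verify \eqref{eqn:mix-x-inter-multi} against \eqref{eqn:mix-x-inter-multi-cont-1}--\eqref{eqn:mix-x-inter-multi-cont-2} via the same two-case analysis (all bilinear products zero versus at least one positive), with parts (ii) and (iii) following because the cost depends only on the unchanged $\{z_{ij}\}$. Your side remark about non-source nodes with $\mathcal I_i=\emptyset$ addresses a degenerate corner case the paper silently ignores; resolving it by setting the corresponding $x_{ij,p}(l)$ to zero (harmless, since \eqref{eqn:mix-f-conv-multi} forces the flows there to vanish) is a slight strengthening rather than a deviation.
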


\begin{proof} Please refer to Appendix B.
\end{proof}


By Lemma \ref{Lem:relation-cont-relax},  solving Problem \ref{Prob:new-low-cont-multi}   is equivalent to solving Problem \ref{Prob:new-low-cont-multi-cont}.

 \subsection{Distributed Solution} 
 
 Problem \ref{Prob:new-low-cont-multi-cont} is a (pure) continuous optimization problem. It   is not convex due to Constraints \eqref{eqn:mix-x-inter-multi-cont-1} and \eqref{eqn:mix-x-inter-multi-cont-2}.  Several penalty methods \cite{Bertsekasbooknonlinear:99} can be applied to find a local minimum of Problem \ref{Prob:new-low-cont-multi-cont}  with polynomial-time complexity.  Those methods can also be implemented in a distributed manner using standard decomposition.  
 On the other hand, by weak duality \cite{Bertsekasbooknonlinear:99}, dual method can be applied to find a lower bound of the global minimum value of Problem \ref{Prob:new-low-cont-multi-cont}. The difference between an obtained local minimum and this lower bound can serve as an upper bound on the performance  gap between the local minimum and the global minimum  of Problem \ref{Prob:new-low-cont-multi-cont}. 
 
 \section{Conclusion}
In this paper, we considered  linear network code constructions for general connections of continuous flows to minimize the total cost of  edge use based on mixing.  
 To solve the minimum-cost network coding design problem, we proposed two equivalent alternative formulations with discrete mixing and continuous mixing, respectively, and developed distributed algorithms to solve them. 
Our approach allows fairly general coding across flows and guarantees no greater cost than any solution without  inter-flow network coding.

\section*{Appendix A: Proof of Lemma \ref{Lem:feasibility-new-low-opt-cont}}

First, we consider $L=1$. We omit the index terms $(1)$ and $(1,1)$ behind the variables for notational simplicity. Let $\{z_{ij}\}$, $\{ x_{ij,p}\}$, $\{\beta_{kij}\}$  and $\{f_{ij,p}^{t}\}$ denote a feasible solution to Problem \ref{Prob:new-low-cont-multi}.   We shall extend the proof of Lemma 1 in \cite{CUI2015ISITreport}  for the integer flows ($f_{ij,p}^{t} \in\{0,1\}$) and unit source rates ($R_p=1$) with one global coding vector over each edge ($z_{i,j}\in\{0,1\}$)  to the general continuous flows ($f_{ij,p}^{t} \in[ 0,B_{ij}]$) and source rates ($R_p\in \mathbb R^+$) with multiple global coding vectors ($z_{i,j}\in [0,B_{ij}]$) over each edge. In the general case, we code over time $n\geq 1$. For all $p\in\mathcal P$, convert  source  $p$ with source rate $R_p$ over time $n$ to $\lfloor nR_p\rfloor$ unit rate sub-sources $p_1,\cdots, p_{\lfloor nR_p\rfloor}$.  For each edge $(i,j)\in \mathcal E$, allow the total number of   the sub-flows of flow $p\in \mathcal P_t$ to terminal $t\in \mathcal T$ to be fewer than or equal to $\lceil nf_{ij,p}^{t}\rceil $. Therefore, the flow path of flow $p$ can be decomposed  into $\lfloor nR_p\rfloor$ unit rate sub-flow paths $p_1,\cdots, p_{\lfloor nR_p\rfloor}$  from source $p\in \mathcal P_t$ to terminal $t\in \mathcal T$. The sum rate of unit rate sub-flows of flow $p$ over edge $(i,j)\in \mathcal E$ is less than or equal to $\lceil nf_{ij,p}^{t}\rceil $. The sum rate of unit rate sub-flows of all the flows  over edge $(i,j)$ is less than or equal to $\bar z_{ij}=\max_{t\in \mathcal T}\sum_{p\in \mathcal P_t}\lceil nf_{ij,p}^{t}\rceil  $.   Decompose edge $(i,j)$ into $\bar z_{ij}$ sub-edges. Let sub-flows to terminal $t$ pass different sub-edges, i.e., each sub-edge transmit at most one sub-flow to terminal $t$. We have now reduced the general case to the special case considered in Lemma 1 in \cite{CUI2015ISITreport}. Therefore, we can show that there exists a feasible linear network code over time $n$. The associated average sum transmission rate over edge $(i,j)$ is $\bar z_{ij}/n$. Note that $\bar z_{ij}/n-z_{ij}/n\leq P/n$. Therefore, this code design can achieve the minimum cost $U_{x}^*(1)$ by taking $n$ arbitrarily large. 

When $L>1$, we can convert  each edge $(i,j)\in \mathcal E$ into $L$  edges. Then, we can apply the above proof for $L=1$ to the equivalent constructed network.

\section*{Appendix B: Proof of Lemma \ref{Lem:relation-cont-relax}}
It is obvious that (i) implies (ii). Next, we show that (i) implies (iii). Suppose (i) holds, which  indicates that  each $\{z_{ij}\}$ associated with a feasible solution to Problem \ref{Prob:new-low-cont-multi} is also associated with a feasible solution to Problem \ref{Prob:new-low-cont-multi-cont}, and vice versa. In addition,  $\{z_{ij}\}$ fully determines $\sum_{(i,j)\in \mathcal E} U_{ij}(z_{ij})$. Thus,  the set of feasible network costs to Problem \ref{Prob:new-low-cont-multi} is the same as that to Problem \ref{Prob:new-low-cont-multi-cont}, implying the optimal values of the two problems are the same. Therefore, we can show that (i) implies (iii).  Thus, to show Lemma \ref{Lem:relation-cont-relax},  it is sufficient to show (i). Note that in the proof, we only need to consider the different constrains between Problem \ref{Prob:new-low-cont-multi} and Problem \ref{Prob:new-low-cont-multi-cont}.

To show (i), we first show  that when $x_{ij,p}(l)\in\{0,1\}$ and $ \beta_{kij}(l,m)\in\{0,1\}$,  Constraint \eqref{eqn:mix-x-inter-multi}  is equivalent to the following two constraints in \eqref{eqn:mix-x-inter-multi-1} and \eqref{eqn:mix-x-inter-multi-2}. 
\begin{align}
& x_{ij,p}(m)\geq   \beta_{kij}(l,m)  x_{ki,p}(l), \ k\in \mathcal I_i, \ l=1,\cdots, L\nonumber\\
&  \hspace{2mm}\quad   m=1,\cdots, L,\  \mathcal I_i \neq \emptyset, \ (i,j)\in \mathcal E, p\in \mathcal P\label{eqn:mix-x-inter-multi-1}\\
& x_{ij,p}(m)\leq \sum_{k\in \mathcal I_i, l=1,\cdots, L} \beta_{kij}(l,m) x_{ki,p}(l), \nonumber\\
&  \hspace{2mm}\quad   m=1,\cdots, L,\  \mathcal I_i \neq \emptyset, \ (i,j)\in \mathcal E, p\in \mathcal P\label{eqn:mix-x-inter-multi-2}
\end{align}
Note that Constraints \eqref{eqn:mix-x-inter-multi}, \eqref{eqn:mix-x-inter-multi-1} and \eqref{eqn:mix-x-inter-multi-2} are for all $m=1,\cdots, L,\  \mathcal I_i \neq \emptyset, \ (i,j)\in \mathcal E$ and $p\in \mathcal P$. Thus, we prove this equivalence by considering the following two cases for any $ m=1,\cdots, L,\  \mathcal I_i \neq \emptyset, \ (i,j)\in \mathcal E$  and $p\in \mathcal P$.  First, consider the case where $\beta_{kij}(l,m)  x_{ki,p}(l)=0$ for all $k\in \mathcal I_i$ and $ l=1,\cdots, L$. Constraint \eqref{eqn:mix-x-inter-multi}  implies that $x_{ij,p}(m)=0$, and Constraints \eqref{eqn:mix-x-inter-multi-1} and \eqref{eqn:mix-x-inter-multi-2}  also imply that $x_{ij,p}(m)=0$. Second, consider 
  the case where there exists at least one pair $(k,l)$, where $k\in \mathcal I_i$ and $ l=1,\cdots, L$, such that $\beta_{kij}(l,m)  x_{ki,p}(l)=1$. Constraint \eqref{eqn:mix-x-inter-multi}  implies that $x_{ij,p}(m)=1$, and Constraints \eqref{eqn:mix-x-inter-multi-1} and \eqref{eqn:mix-x-inter-multi-2}  also imply that $x_{ij,p}(m)=1$. Note that under the integer constraints $x_{ij,p}(l)\in\{0,1\}$ and $ \beta_{kij}(l,m)\in\{0,1\}$, the above two cases are the only two possible cases. Therefore, we can show  Constraint \eqref{eqn:mix-x-inter-multi}  is equivalent to Constraints \eqref{eqn:mix-x-inter-multi-1} and \eqref{eqn:mix-x-inter-multi-2}. 

Next, we show that the first statement of (i) holds.  Suppose $\{z_{ij}\},\{z_{ij}(l)\}, \{f_{ij,p}^{t}(l)\}, \{x_{ij,p}(l)\},\{\beta_{kij}(l,m)\}$ is a feasible solution to  Problem \ref{Prob:new-low-cont-multi}. Let $\bar x_{ij,p}(l)=x_{ij,p}(l)\in\{0,1\}$ for all  $ l=1,\cdots, L,  \ (i,j)\in \mathcal E$ and $p\in \mathcal P$,  and $\bar \beta_{kij}(l,m)=\beta_{kij}(l,m)\in\{0,1\}$ for all $k\in \mathcal I_i, \  \mathcal I_i \neq \emptyset, \ (i,j)\in \mathcal E$ and $ l,m=1,\cdots, L $. Since Constraints \eqref{eqn:mix-x-multi-cont}, \eqref{eqn:mix-beta-multi-cont} and \eqref{eqn:mix-f-x-cont-multi-cont} in Problem \ref{Prob:new-low-cont-multi-cont}
can be treated as the continuous relaxation of Constraints \eqref{eqn:mix-x-multi}, \eqref{eqn:mix-beta-multi} and \eqref{eqn:mix-f-x-cont-multi} in Problem \ref{Prob:new-low-cont-multi}, 
$\{f_{ij,p}^{t}(l)\}, \{\bar{ x}_{ij,p}(l)\},\{\bar \beta_{kij}(l,m)\}$ satisfies Constraints \eqref{eqn:mix-x-multi-cont}, \eqref{eqn:mix-beta-multi-cont} and \eqref{eqn:mix-f-x-cont-multi-cont}. In addition, since Constraint \eqref{eqn:mix-x-inter-multi}  is equivalent to Constraints \eqref{eqn:mix-x-inter-multi-1} and \eqref{eqn:mix-x-inter-multi-2}, and Constraints \eqref{eqn:mix-x-inter-multi-cont-1}  and \eqref{eqn:mix-x-inter-multi-cont-2} can be treated as the continuous relaxation of Constraints \eqref{eqn:mix-x-inter-multi-1} and \eqref{eqn:mix-x-inter-multi-2}, $\{\bar{x}_{ij,p}(l)\},\{\bar \beta_{kij}(l,m)\}$ satisfies Constraints \eqref{eqn:mix-x-inter-multi-cont-1} and \eqref{eqn:mix-x-inter-multi-cont-2}. 
 Therefore, we can show $\{z_{ij}\},\{z_{ij}(l)\}, \{f_{ij,p}^{t}(l)\}, \{\bar{x}_{ij,p}(l)\},\{\bar \beta_{kij}(l,m)\}$ is a feasible solution to Problem \ref{Prob:new-low-cont-multi-cont}. 

Finally, we show that the second statement of (i) holds.  Suppose $\{z_{ij}\},\{z_{ij}(l)\},\{f_{ij,p}^{t}(l)\}, \{\bar{  x}_{ij,p}(l)\},\{\bar \beta_{kij}(l,m)\}$ is a feasible solution to Problem \ref{Prob:new-low-cont-multi-cont}. Let $x_{ij,p}(l)=\lceil\bar x_{ij,p}(l)\rceil$  for all  $ l=1,\cdots, L,  \ (i,j)\in \mathcal E$ and $p\in \mathcal P$, and $\beta_{kij}(l,m)=\lceil\bar\beta_{kij}(l,m)\rceil$ for all $k\in \mathcal I_i, \  \mathcal I_i \neq \emptyset, \ (i,j)\in \mathcal E$ and $ l,m=1,\cdots, L $.
In other words, if $\bar x_{ij,p}(l)=0$ ($\bar\beta_{kij}(l,m)=0$), then $ x_{ij,p}(l)=0$ ($\beta_{kij}(l,m)=0$); if $\bar x_{ij,p}(l,m)\in (0,1]$ ($\bar\beta_{kij}(l,m)\in(0,1]$), then $ x_{ij,p}(l)=1$ ($\beta_{kij}(l,m)=1$). It is obvious that $\{f_{ij,p}^{t}(l)\}, \{ x_{ij,p}(l)\},\{\beta_{kij}(l,m)\}$ satisfies Constraints \eqref{eqn:mix-x-multi}, \eqref{eqn:mix-beta-multi} and \eqref{eqn:mix-f-x-cont-multi}. It remains to show $\{x_{ij,p}(l)\},\{ \beta_{kij}(l,m)\}$ satisfies Constraint \eqref{eqn:mix-x-inter-multi}. Note that Constraint \eqref{eqn:mix-x-inter-multi} is for all $m=1,\cdots, L,\  \mathcal I_i \neq \emptyset, \ (i,j)\in \mathcal E$ and $ p\in \mathcal P$. 
Thus, similarly, we prove this result by considering the following two cases for any $ m=1,\cdots, L,\  \mathcal I_i \neq \emptyset, \ (i,j)\in \mathcal E$ and $p\in \mathcal P$.  First, consider the case where $\bar\beta_{kij}(l,m) \bar x_{ki,p}(l)=0$ for all $k\in \mathcal I_i$ and $ l=1,\cdots, L$.  Constraints \eqref{eqn:mix-x-inter-multi-cont-1} and \eqref{eqn:mix-x-inter-multi-cont-2}  imply that $\bar x_{ij,p}(m)=0$, and hence, we have $x_{ij,p}(m)=\lceil\bar x_{ij,p}(m)\rceil=0$. In addition, $\bar\beta_{kij}(l,m) \bar x_{ki,p}(l)=0$ for all $k\in \mathcal I_i$ and $ l=1,\cdots, L$ also implies $\beta_{kij}(l,m)  x_{ki,p}(l)=\lceil \bar\beta_{kij}(l,m)\rceil \lceil \bar  x_{ki,p}(l)\rceil=0$ for all $k\in \mathcal I_i$ and $ l=1,\cdots, L$. Thus, in this case, we can show $\{x_{ij,p}(l)\},\{ \beta_{kij}(l,m)\}$ satisfies Constraint  \eqref{eqn:mix-x-inter-multi}. Second, consider 
 the case where there exists at least one pair $(k,l)$, where $k\in \mathcal I_i$ and $ l=1,\cdots, L$, such that $\bar\beta_{kij}(l,m) \bar x_{ki,p}(l)\in (0,1]$. Constraints \eqref{eqn:mix-x-inter-multi-cont-1} and \eqref{eqn:mix-x-inter-multi-cont-2}  together with Constraints \eqref{eqn:mix-x-multi-cont} and \eqref{eqn:mix-beta-multi-cont}
  imply that $\bar x_{ij,p}(m)\in(0,1]$, and hence, we have $x_{ij,p}(m)=\lceil \bar  x_{ki,p}(l)\rceil=1$.  In addition, $\bar\beta_{kij}(l,m) \bar x_{ki,p}(l)\in (0,1]$  together with Constraints \eqref{eqn:mix-x-multi-cont} and \eqref{eqn:mix-beta-multi-cont}  also imply $\beta_{kij}(l,m)  x_{ki,p}(l)=\lceil \bar\beta_{kij}(l,m)\rceil \lceil \bar  x_{ki,p}(l)\rceil=1$. Thus, in this case, we can show $\{x_{ij,p}(l)\},\{ \beta_{kij}(l,m)\}$ satisfies Constraint \eqref{eqn:mix-x-inter-multi}. Note that under the continuous constraints $x_{ij,p}(l)\in [0,1]$ and $ \beta_{kij}(l,m)\in [0,1]$, the above two cases are the only two possible cases. Therefore, we can show  $\{z_{ij}\},\{z_{ij}(l)\}, \{f_{ij,p}^{t}(l)\}, \{ x_{ij,p}(l)\},\{\beta_{kij}(l,m)\}$ is a feasible solution to  Problem \ref{Prob:new-low-cont-multi}. 
 
 Therefore, we complete the proof of Lemma \ref{Lem:relation-cont-relax}.



\end{document}